\newtheorem{theorem}{Theorem}[section]
\newtheorem{definition}[theorem]{Definition}
\journal{Geometry and Physics}
\begin{document}

\begin{frontmatter}

\title{Relating a Geroch-like boundary and the $a$ boundary constructions for spacetimes}

\author[label1]{Abbas M. Sherif}
\ead{abbasmsherif25@gmail.com}
 \address[label1]{Cosmology and Gravity Group, Department of Mathematics and Applied Mathematics, University of Cape Town, Rondebosch 7701, South Africa}
\author[label2]{Gareth Amery}
\ead{garethamery@gmail.com}
 \address[label2]{Astrophysics and Cosmology Research Unit (ACRU), School of Mathematics, Statistics and Computer Science, University of KwaZulu-Natal, Private Bag X54001, Durban 4000, South Africa}

\begin{abstract}
We construct a Geroch-like boundary (when restricted to geodesic curves, this boundary contains as a subset the Geroch's $g$ boundary), which we denote by $\tilde{g}$, and establish an explicit embedding of the $\tilde{g}$ boundary into the $a$ boundary of Scott and Szekeres. This construction, and subsequently the explicit embedding, is done in a 'natural' way (the emphasis on the word natural here will be clarified in the text), thereby answering in the affirmative the outstanding question as to whether there exists a natural way to relate the $g$ and the $a$ boundary constructions.
\end{abstract}

\begin{keyword}

General Relativity \sep Singularities \sep Spacetime boundaries.

\PACS 04.20.Dw \sep 04.20.Gz
\MSC[2020] 83C75 \sep 57R40 \sep 57R42 \sep 57R55
\end{keyword}

\end{frontmatter}

\section{Introduction}
\label{}

Not long after the field equations were introduced by Einstein, solutions were found with singular points where known physical laws break down. Some of these were only apparently singular points and were actually due to the choice of coordinates. It is possible to choose a set of coordinates that \textit{maximally extend} the spacetime, where the metric becomes analytic through these points (eg. the Kruskal embedding, \cite{kru1}). Truly singular points are those which maximally analytic extensions can not remove. As open sets of the topology on such spacetime manifolds do  not contain these singular points, interpreting such solutions, as well as the effects of the singular points on the solutions, has been a daunting task for mathematical physicists. Over the past sixty years, schemes have been developed, \cite{ger1,ss1,bb1,cc1}, which ``capture'' these points in constructed boundaries. These boundaries are then attached to the spacetime manifold, a process known as manifold completion. The resulting manifold is called the manifold with boundary.

An early investigation of geodesic completeness was carried out by Szekeres, \cite{szk1}, using a power series expansion of the Schwarzschild solution around the coordinate singularity \(r=2m\) to obtain a transformation under which \(r=2m\) is a regular manifold point. \cite{ger1}, provided the first boundary construction for singular spacetimes; the \(g\) boundary. This involves quotienting the set of incomplete geodesics in the spacetime under the equivalence that the geodesics limit to the same point. The \(g\) boundary is restricted in terms of the curves considered because it assumes a definition of a singularity in terms of geodesic incompleteness. However, it is possible to have a non-singular geodesically incomplete spacetime, \cite{ss1,haw1,mis1}. This prompted the construction of alternate boundaries.

\cite{bb1}, constructed the \(b\) boundary. This construction maps endpoints of incomplete curves in the bundle of frames of \(M\), \(\mathcal{L}\left(M\right)\) - generated via Cauchy completion - as additional points of \(M\). It is best known for its application by Hawking and others to singularity theorems, \cite{haw1}, and considers a broader class of curves than the \(g\) boundary. However, this boundary construction is fraught with undesirable properties. For one, a \(4\)-dimensional manifold has a \(20\)-dimensional frame bundle. Though it was shown, \cite{bb1}, that it is sufficient to consider only the bundle of orthonormal frames - which is \(10\)-dimensional - intuition and computations are problematic. As a result, only situations with sufficient symmetries to reduce the dimensions can reasonably be approached by this scheme. Furthermore, there is the problem of the \(b\) boundary being non-Hausdorff which means that a \(b\) boundary point may be arbitrarily close to manifold points.

In 1972 Geroch \textit{et al.}, \cite{cc1}, constructed the \(c\) boundary. This construction is made solely using the causal structure of the spacetime manifold. It attaches future (respectively, past) endpoints to every inextensible curve in the spacetime manifold. These endpoints are called ideal points, and their collection can be interpreted as the boundary at conformal infinity. The attachment is done in such a way that the ideal points only depend on the past (resp., future) of the curves. These ideal points are represented by \textit{terminal indecomposable future} (resp., \textit{past}) sets, \cite{lo4}, which are the maximal future (resp., past) sets that are not the chronological future (resp., past) of any manifold point and cannot be represented as the union of two proper subsets, both of which are open future (resp., past) sets.

As with the other constructions, the \(c\) boundary construction is problematic and carries a lot of compexities. The original topology constructed by Geroch \textit{et al.} has topological separation problems as well as non-intuitive results for certain solutions. As such there have been several modifications of the \(c\) boundary via the construction of different topologies (see references \cite{lo4,as1,bs1,sz1,sh1,sh2}). 

Geroch, Canbin and Wald constructed an example in 1982, \cite{glw1}, for which any singular spacetime boundary construction which falls in certain class (including the \(g\) and \(b\) boundaries) will exhibit pathological topological properties. This led to their assertion that this may in general be true; one might always be able to construct an example for which a particular singular boundary construction fails.

In 1994, Scott and Szekeres constructed the abstract boundary or simply the \(a\) boundary, \cite{ss1}. Given  the existence of open embeddings \(\phi_i:M\longrightarrow \hat{M}_i\), the \(a\) boundary is constructed by defining an equivalence relation on the topological boundaries \(\partial_{\phi_i}M\) with respect to the \(\hat{M}_i\). This construction stands out as it considers a much broader class of curves relative to the other constructions, as well as by virtue of being \(T_2\) separable, which is a crucial requirement for a physically reasonable spacetime, \cite{ss2}. Its generality means that it is applicable in contexts in which one has only an affine connection; for example, Yang-Mills and Einstein-Cartan theories, \cite{ss1}. This construction also turns out to be easier to implement in general compared to the other construction schemes. However, the \(a\) boundary is also plagued whith its own shortcomings (see the discussions in references \cite{ek1,ek2}). It would be nice to have a causal characterization of the \(a\) boundary, as well as being able to define differentiable and metric structures on the \(a\) boundary, and to establish what (if any) new information can be learned from these properties on the \(a\) boundary.

Curiel has argued that the use of curve incompleteness is an adequate definition in context of singular spacetimes and singular structures as far as existing spacetimes of physical relevance are concerned, \cite{ek1,ek2}. This would seem to suggest that the \(g\), \(b\) and \(a\) boundaries are all acceptable boundary constructions in a general relativistic context. This is not necessarily true outside of this immediate context, and this is where the \(a\) boundary - which is more broadly applicable - becomes the construction of choice. Nevertheless, the ubiquity of the application of the curve incompleteness approach motivates a better understanding of the relationship between such constructions and the \(a\) boundary.

To date, there have been no successful results relating the various boundary constructions, although there has been some work on relating the notions of  \(b\)- completeness and \(g\)- completeness, \cite{haw1}. Moreover, within different boundary constructions, there have been several modifications to deal with some of the problems faced by these constructions, \cite{cc1,lo4,bs1,sz1,sh1,sh2}. However, there does not seem to be a ``natural'' and general way to map between different boundary constructions, \cite{as1}.

\subsection{Objective}
The aim of this paper is to construct a Geroch-like boundary - which when restricted to geodesics contains the Geroch \(g\) boundary construction - and explore the relationship between this construction and the \(a\) boundary, which was identified as an important open problem in the original paper constructing the \(a\) boundary \cite{ss1}. If structures on the \(g\) boundary are carried over to this Geroch-like boundary, then such relationship could in principle allow one to construct on the \(a\) boundary, the various structures of interest on the \(g\) boundary. 

We emphasize here that this work does not attempt to address the major limitation of boundary constructions - like the \(b\) boundary, the \(g\) boundary and the \(a\) boundary - where the associated topologies (for the \(a\) boundary we specify here the attached point topology) fail to relate boundary points to points in the interior of the spacetime manifold. This was the motivation for the construction of the strongly attached point topology by \cite{ss10}. In principle our approach could be adapted to the Barry and Scott's strongly attached point topology, but we expect to address this separately.

\subsection{Structure}
In section \ref{02}, we give an overview of the \(g\) boundary and the \(a\) boundary constructions. Section \ref{03} presents our new construction and its relationship to the \(a\) boundary. In section \ref{gtr}, we discuss the limitations of the current work, and what possible route might be taken in future works to address these limitations. In section \ref{05} we summarize our results and anticipate additional future avenues of research.

\section{Some notes on the \(g\) and \(a\) boundaries}\label{02}

 In this section we provide some background on the \(g\) and \(a\) boundary constructions, as well as the topologies that can be placed on these constructions. We follow the standard references \cite{ger1, ss1, haw1, as1,ss2}. 

\subsection{The \(g\) boundary}\label{nnjjt}
Let \(M\) be a geodesically complete spacetime manifold, and let \(\Sigma\) be a co-dimension \(1\) submanifold of \(M\) such that \(M=M_1\sqcup\Sigma\sqcup M_2\). In other words, \(\Sigma\) divides \(M\) into disjoint subsets \(M_1\) and \(M_2\). Suppose one is given one of the subsets, say \(M_1\). A natural question arises as to how much information about \(\Sigma\) can be obtained from what we know about \(M_1\). The idea is to use the information about the incomplete geodesics in \(M_1\) - those in \(M_1\) which, when extended in \(M\), pass through \(\Sigma\) - to recover ``parts'' of \(\Sigma\). One then groups these geodesics as follows: Let \(\gamma\) be an incomplete geodesic in \(M_1\) and generate a family of geodesics by allowing small variations in the initial conditions - a point \(p\in \gamma\) and a tangent vector at \(p\) - of \(\gamma\). This family of geodesics traces out a \(4\)-dimensional tube called a ``thickening'' of \(\gamma\). Another incomplete geodesic \(\gamma'\) is said to be related to \(\gamma\) if \(\gamma'\) enters and remains in every thickening of \(\gamma\). However, since finding \(\Sigma\) may not always suffice, the above construction must be generalized (these notes follow those of reference \cite{ger1}).

Let \(G=TM \setminus\lbrace{0\rbrace}\) be the reduced tangent bundle on a spacetime manifold \(M\), made up of the non-zero vectors in \(TM\). The set \(G\subset TM\) is an \(8\)-dimensional manifold whose elements are the pairs \(\left(p,\xi^{\alpha}\right)\). Each \(\left(p,\xi^{\alpha}\right)\in G\) uniquely determines the geodesic \(\gamma\), satisfying the geodesic equation, for which \(\gamma\left(0\right)=p\) and \(\gamma'\left(0\right)=\xi^{\alpha}\). Such a geodesic satisfies the following properties:

\begin{itemize}
\item it has one specified end point;

\item it has been extended as far as possible in some direction from the end point; 

\item it has a given affine parameter; and

\item the affine parameter vanishes at the end point and is positive elsewhere on the curve.
\end{itemize}

The \(g\) boundary is constructed from the subset \(G_I\subset G\), where \(G_I\) is obtained via the construction of a field \(\varphi\) that identifies points of \(G\) with the total affine length of the associated geodesics uniquely determined by those points. Elements of \(G_I\) are then those points \(\left(p,\xi^a\right)\in G\) such that \(\varphi\left(p,\xi^{\alpha}\right)\) is finite.

One constructs a \(9\)-dimensional manifold \(H=G\times\left(0,\infty\right)\). Define the folowing subsets \(H_0\) and \(H_+\) of \(H\) as 
\begin{eqnarray}\label{a01}
H_0&=&\lbrace{\left(p,\xi^{\alpha},d\right)|\varphi\left(p,\xi^{\alpha}\right)=d\rbrace},\\
H_+&=&\lbrace{\left(p,\xi^{\alpha},d\right)|\varphi\left(p,\xi^{\alpha}\right)>d\rbrace}.
\end{eqnarray}
Then there is a natural map \(\Psi:H_+\longrightarrow M\), that assigns to each point \(\left(p,\xi^{\alpha},d\right)\in H_+\) a point \(p'\in M\) obtained by moving a distance \(d\) along the geodesic uniquely associated to \(\left(p,\xi^{\alpha}\right)\). One defines a topology on \(G_I\) as follows: Let \(O\) be an open set of \(M\), and define open subsets \(S\left(O\right)\) as consisting of those points \(\left(p,\xi^{\alpha}\right)\in G_I\) such that there exists an open set \(U\subseteq H\) containing the point \(\left(p,\xi^{\alpha},\varphi\left(p,\xi^{\alpha}\right)\right)\) of \(H_0\), and \(\Psi\left(U\cap H_+\right)\subseteq O\).

Given two open sets \(O_1,O_2 \) in \(M\), it can be shown that \(S\left(O_1\right)\cap S\left(O_2\right)=S\left(O_1\cap O_2\right)\), \cite{ger1}. These open sets therefore form the basis of a topology on \(G_I\). Equivalence classes of elements of \(G_I\) can now be constructed by requiring that two elements \(\alpha,\beta \in G_I\) are equivalent if they always appear in the same open set. The collection of such equivalence classes forms the \(g\) boundary, where the induced topology on the \(g\) boundary is the quotient topology, and is \(T_0\) separable, \cite{ger1,hock1}. However, the ideal separation property would be the \(T_2\) separation property. Geroch did provide a recipe for constructing a \(g\) boundary that is \(T_2\) separable, but this requires the construction of all \(T_2\) separable equivalence relations on \(G_I\), which in general would not be feasible.

A new manifold \(M^*\), called the spacetime with \(g\) boundary, can now be constructed from the disjoint union of the spacetime manifold and the \(g\) boundary, i.e., \(M^*=M\sqcup g\). A topology on \(M^*\) can be described as follows: A subset of the form \(\left(O,U\right)\), where \(O\) is an open set of \(M\) and \(U\) is an open set of \(g\), will be called open in \(M^*\) if \(S\left(O\right)\supset U\). These open sets on \(M^*\) form a basis for a topology on \(M^*\). It can be checked that an intersection of two such open sets is open. If \(U=S\left(O\right)\), then \(\left(O,U\right)\) will be called a \textit{full open} set of \(M^*\). For more details and discussions see reference \cite{ger1}, from which we have also adopted the notation for open sets of the topology on the manifold completion throughout the rest of the paper.

It is important to note that Geroch's original construction only considered the set \(G_I\) of incomplete geodesics, and only allows for the identification of singular boundary points. Since we are interested in relating the \(g\) and \(a\) boundaries, in what follows we will consider the set \(G\) which allows one to treat a wider class of curves, and, hence, of boundary points. This allows for a definition of open sets on \(G\), which we shall exploit in order to establish a relationship with the \(a\) boundary (see section \ref{03})

\subsection{The \(a\) boundary}

The construction of the \(a\) boundary relies on the existence of open embeddings into manifolds of the same dimension, or \textit{envelopments}, \cite{ss1}. An advantage of the \(a\) boundary construction is that it can be applied to any manifold \(M\) and it is independent of both the affine connection on \(M\) and the chosen family of curves in \(M\). If we specify a family of curves \(\mathcal{C}\) in \(M\), satisfying the bounded parameter property (to be formally defined in section \ref{03}) then the \(a\) boundary points can be classified as \textit{regular, points at infinity, unapproachable points}, or \textit{singularities}. This subsection summarizes arguments developed in reference \cite{ss1}, unless otherwise cited.

Let \(M\) be a spacetime manifold, and let \(\phi:M\longrightarrow \hat{M}\) be an envelopment of \(M\) into \(\hat{M}\) where the dimension of \(M\) is the same as the dimension of \(\hat{M}\).
 
\begin{definition}\label{def1}
A \textbf{boundary point} of \(M\) is a point \(p\in\partial_{\phi}\left(M\right)\) in the topological boundary of \(\phi\left(M\right)\subseteq \hat{M}\). A \textbf{boundary set} is a non-empty subset \(B\) of \(\partial_{\phi}M\), comprised of boundary points.
\end{definition}

Let \(\phi':M\longrightarrow \hat{M}'\) be a second envelopment of \(M\) into \(\hat{M}'\), and let \(B'\subseteq \partial_{\phi'}M\). We define a \textit{covering relation} as follows:
\begin{definition}\label{def2}
A boundary set \(B\subseteq \partial_{\phi}M\) in \(\hat{M}\)  \textbf{covers} a boundary set \(B'\subseteq \partial_{\phi'}M\) in \(\hat{M}'\) if for every open neighborhood \(U\) of \(B\) in \(\hat{M}\), there exists an open neighborhood \(U'\) of \(B'\) in \(\hat{M}'\) such that
\begin{eqnarray}\label{a02}
\phi\circ\phi'^{-1}\left(U'\cap \phi'\left(M\right)\right)&\subseteq & U.
\end{eqnarray}
\end{definition}

\begin{definition}\label{def3}
A boundary set \(B\) is \textbf{equivalent} to a boundary set \(B'\) if \(B\) covers \(B'\) and \(B'\) covers \(B\).
\end{definition}
The covering relation defines an equivalence relation on the set of all boundary sets induced by all possible envelopments of \(M\). An equivalence class \(\left[B\right]\) of boundary sets is called an \textit{abstract boundary set}.
\begin{definition}\label{def4}
An \textbf{abstract boundary point} is an abstract boundary set that has a singleton \(p\) as a representative element. The set of all abstract boundary points is called the \textbf{abstract boundary} or simply the \(a\) boundary.
\end{definition}

Let \(O\) be an open set in \(M\) and let \(B\in \partial_{\phi}M\) be a boundary set in the topological boundary of an envelopment \(\phi:M\longrightarrow \hat{M}\) of \(M\) into \(\hat{M}\). A boundary point \(p\in B\) (respectively, a boundary set \(B\)) is attached to the open set \(O\) of \(M\) if for every open neighborhood \(U\) in \(\hat{M}\) of \(p\) (resp., of \(B\)), we have that \(U\cap \phi\left(O\right)\neq  \varnothing\). For a boundary set to be attached to the open set \(O\), we require at least one boundary point \(p\in B\) to be attached to \(O\) . An abstract boundary point \(\left[p\right]\) is attached to an open set \(O\subseteq M\) if the boundary point \(p\) is attached to \(O\) (see the reference \cite{ss2} for more details on the attached point topology).

Again, one wants a topology on the disjoint union of the spacetime manifold \(M\) and the abstract boundary \(a\): \(\bar{M}\equiv M\sqcup a\). Subsets of \(\bar{M}\) of the form \(\left(O\cup B,C\right)\), where \(O\) is a nonempty open set of \(M\), \(B\) is the set of all abstract boundary points which are attached to \(O\), and \(C\) is some subset of the abstract boundary, and where the collection of every \(C\) set is the set of all subsets of the abstract boundary, will be called open in \(\bar{M}\). These open sets form a basis for a topology on \(\bar{M}\), which is Hausdorff, \cite{ss2}. This is known as the attached point topology, and the open sets of the topology induced on \(a\) from the attached point topology are precisely the \(C\) sets.

By construction, the identification of an abstract boundary point \(\left[\hat{p}\right]\), eg. a singularity, is completely determined by the open set of \(M\) to which \(\left[\hat{p}\right]\) is attached. This is one of the primary motivations for boundary constructions. What this means is that one has an envelopment \(\phi\) of a manifold \(M\) into, say, \(\hat{M}\), with \(\hat{p}\in \partial_{\phi}M\) and \(\hat{p}\) is attached to \(O\) (which implies \(\left[\hat{p}\right]\) is attached to \(O\)). Then the image \(\phi\left(O\right)\) extends (in \(\hat{M}\)) to \(\hat{p}\in \hat{M}\). One therefore has knowledge as to an open set in \(M\) that is close to the singularity, and thus a means of locating them.

\section{A new construction}\label{03}

As mentioned in section \ref{nnjjt}, we intend our construction to be applicable to a wider class of curves - those satisfying what is called the bounded parameter property - as in the case of the \(a\) boundary. In this section we first briefly discuss the class of curves satisfying the bounded parameter property, of which geodesics form a subclass. We then proceed to the construction of a Geroch-like boundary, and its relationship with the \(a\) boundary, before concluding with a few illustrative examples.

\subsection{The bounded parameter property}

The \(a\) boundary considers a broad class of curves satisfying the \textit{bounded parameter property} (b.p.p), \cite{ss1}.

Let \(\gamma:\left[a,b\right)\longrightarrow M\) (with \(a<b\leq \infty\)) be a parametrized and regular (the tangent vector \(\dot{\gamma}\) is nowhere vanishing on the interval \(\left[a,b\right)\)) curve in \(M\), with \(\gamma\left(a\right)=p\) as the starting point of \(\gamma\). A curve \(\gamma':\left[a',b'\right)\longrightarrow M\) is a subcurve of \(\gamma\) if \(a\leq a' < b' \leq b\) and \(\gamma'=\gamma|_{\left[a',b'\right)}\). If \(a=a'\) and \(b>b'\) we say that \(\gamma\) is an extension of \(\gamma'\).

\begin{definition}[Change of parameter]\label{def5}
A \textbf{change of parameter} is a monotone increasing \(C^1\) function
\begin{eqnarray*}
s:\left[a,b\right)\longrightarrow \left[a',b'\right),
\end{eqnarray*}
which maps \(a\) to \(a'\) and \(b\) to \(b'\), and \(\frac{ds}{d\lambda}>0\) for \(\lambda\in \left[a,b\right)\). The curve \(\gamma'\) is obtained from \(\gamma\) via the change of parameter \(s\) if the following diagram commutes:
\begin{eqnarray}\label{a03}
\begin{tikzcd}
\left[a,b\right) \arrow["s", rd] \arrow[r, "\gamma"] & M \\
                                 & \left[a',b'\right) \arrow[u, "\gamma'"].
\end{tikzcd}
\end{eqnarray}
\end{definition}

\begin{definition}[Bounded parameter property (b.p.p.)]\label{def6}
A family of parametrized curves \(\mathcal{C}\) in \(M\) is said to have the bounded parameter property if
\begin{itemize}
\item for any \(p\in M\), \(\exists\) at least one \(\gamma\in \mathcal{C}\) such that \(\gamma\left(\lambda\right)=p\) for some \(\lambda\in \left[a,b\right)\);

\item if \(\gamma\in \mathcal{C}\), so is every subcurve \(\gamma'\) of \(\gamma\); and 

\item for any pair \(\gamma,\gamma'\in \mathcal{C}\) for which there exists an \(s\) such that the diagram in (\ref{a03}) commutes, we have either the parameter on both \(\gamma\) and \(\gamma'\) is bounded, or the parameter on both \(\gamma\) and \(\gamma'\) is unbounded.
\end{itemize}
\end{definition}

Families of curves satisfying the b.p.p. include geodesics with affine parameter, differentiable curves with generalized affine parameter and timelike geodesics parametrized by proper time. This generality carries over to the generality of the \(a\) boundary construction, as well as our our construction. As with the \(a\) boundary, our construction applies to a smooth manifold of any dimension endowed with an affine connection and b.p.p. curves. While this is still modest compared to the \(a\) boundary which exists for any manifold independent of an affine connection and a family of curves, our construction could, in principle, equally be applied to theories such as Einstein-Cartan, Kaluza-Klein, Yang-Mills, etc., though this is not the subject of this current work. For the remainder of this paper all curves will satisfy the b.p.p., unless otherwise stated. 

We conclude this subsection with several standard definitions of utility in the sequel.

\begin{definition}[Limit point]
Let \(\gamma:\left[a,b\right)\longrightarrow M\) be a curve. A point \(\hat{p}\) is a limit point of \(\gamma\) if there exists an increasing infinite sequence of real numbers \(t_i\in\left[a,b\right)\) with \(t_i\rightarrow b\) such that \(\gamma\left(t_i\right)\rightarrow \hat{p}\). This implies that for every subcurve \(\gamma':\left[a',b\right)\longrightarrow M\) with \(a\leq a' < b\), \(\gamma\left(t\right)\) enters every neighborhood \(U\) of \(\hat{p}\). 
\end{definition}

\begin{definition}[Endpoint]
Let \(\gamma:\left[a,b\right)\longrightarrow M\) be a curve. A point \(\hat{p}\) is an endpoint of \(\gamma\) if \(\gamma\left(t\right)\rightarrow \hat{p}\) as \(t\rightarrow b\). For Hausdorff manifolds, \(\hat{p}\) is unique.
\end{definition}

\begin{definition}[Approach by a curve]
Let \(\phi: M\longrightarrow \hat{M}\) be an envelopment of \(M\) into \(\hat{M}\), and let \(\gamma:\left[a,b\right)\longrightarrow M\) be a curve in \(M\). We say \(\gamma\) \textbf{approaches} a boundary set \(B\in \partial_{\phi}M\) if there exists a \(\hat{p}\in B\) such that \(\hat{p}\) is a limit point of \(\phi\circ\gamma\).
\end{definition}

\subsection {A Geroch-like boundary}\label{sub31}

In this subsection we define open sets on the reduced tangent bundle \(G\) making use of the attachment relation from the \(a\) boundary and a natural topology on the tangent bundle. This allows us to define an equivalence relation on \(G\), and the collection of the associated equivalence classes defines a Geroch-like boundary which we shall denote \(\tilde{g}\).  

In what follows, all manifolds are assumed to be smooth, Hausdorff, connected, and paracompact. The condition of paracompactness ensures that all manifolds considered herein are metrizable.

Given a manifold \(M\), we recall that the tangent bundle \(TM\) on \(M\) can be given a natural topology that is Hausdorff (for example, see reference \cite{curm1}). Consider this topology on \(TM\) with \(O\) being an open subset of \(M\), and let \(\gamma_p\) be the curve associated with a point \(\left(p,\xi^{\alpha}\right)\in G\subset TM\) (whenever we write \textit{\(\gamma_p\) associated with the point \(\left(p,\xi^{\alpha}\right)\)}, we will mean a b.p.p. curve \(\gamma_p:\left[a,b\right)\longrightarrow M\) starting at the manifold point \(\gamma\left(a\right)=p\)). Suppose \(\phi:M\longrightarrow \hat{M}\) is an envelopment from \(M\) into \(\hat{M}\), and let \(\left[\hat{p}\right]\) be an abstract boundary point attached to \(O\), with \(\hat{p}\) in the topological boundary \(\partial_{\phi}M\). We define open sets on \(G\) as follows:

\begin{eqnarray}\label{to1}
\begin{split}
S\left(O\right)&=\{\left(p,\xi^{\alpha}\right)\in G\  | \ \phi\left(\gamma_p\left(\lambda\right)\right)\in \phi\left(O\right), \forall \ \lambda\in \left[a,b\right) \ \text{and} \\
&\ \ \ \ \ \phi\left(\gamma_p\right)\supset\phi\left(\gamma'_p\right)\subset\phi\left(O\right)\cap N_{\hat{p}}, \ \text{for subcurve}\ \phi\left(\gamma'_p\right) \\
&\ \ \ \ \ \text{of}\  \phi\left(\gamma_p\right)\ \text{for all open neighborhoods}\  N_{\hat{p}}\  \text{of}\  \hat{p}, \ \text{over}\\
&\ \ \ \ \ \text{all possible envelopments}\ \phi\ \text{of}\ M\}.
\end{split}
\end{eqnarray}
Of course, \(S\left(O\right)\) is empty if no boundary set is attached to \(O\). It is also easily seen that \(S\left(O_1\right)\cap S\left(O_2\right)=S\left(O_1\cap O_2\right)\). We therefore have that the \(S\left(O\right)\) sets form a basis for a topology on \(G\). (See figure \ref{fig:a} for a depiction of the attachment of \(\hat{p}\) to \(O\) and the portion of the curve in the intersection.)

\begin{figure}
\def\svgwidth{14cm}
\def\svgheight{12cm}
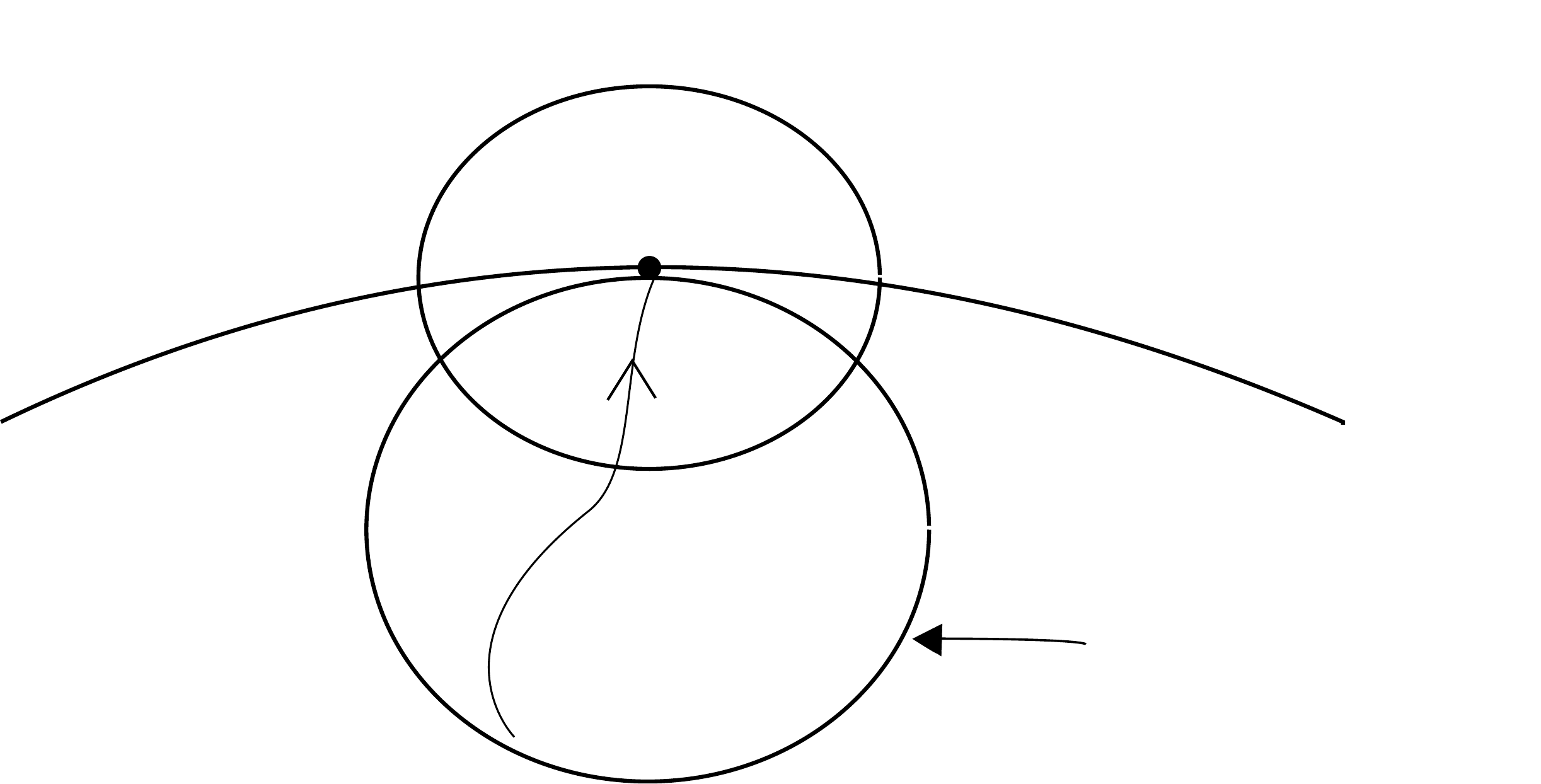
\caption{Depiction of the open set \(\phi\left(O\right)\) containing \(\phi\left(\gamma_p\right)\) to which \(\hat{p}\) is attached, and the portion \(\gamma'_p\) of \(\gamma_p\) lying in the intersection \(\phi\left(O\right)\cap N_{\hat{p}}\).}
\label{fig:a}
\end{figure}

\begin{definition}\label{def05}
Let \(\left(p,\xi^{\alpha}\right)\) and \(\left(q,\xi^{\beta}\right)\) be points in \(G\). We define an \textbf{equivalence relation}, denoted \(E_1\), on \(G\) as follows: \(\left(p,\xi^{\alpha}\right)\sim \left(q,\xi^{\beta}\right)\) iff the curves associated with \(\left(p,\xi^{\alpha}\right)\) and \(\left(q,\xi^{\beta}\right)\) approach the same abstract boundary point. 
\end{definition}
The proof that \(E_1\) is indeed an equivalence relation follows immediately from the definition of an abstract boundary point.

We stress that the use of the abstract boundary point in definition (\ref{def05}) means that we are considering equivalence classes of boundary sets of envelopment via definition (\ref{def3}). We also note that, if two curves \(\gamma_1\) and \(\gamma_2\) both start at the same point, then from (\ref{to1}), it is clear that they will both approach the same abstract boundary point. 

\begin{definition}\label{def155}
The collection of all equivalence classes in \(G\) under the equivalence relation \(E_1\) will be called the \(\tilde{g}\) boundary. 
\end{definition}

The \(\tilde{g}\) boundary naturally inherits the coinduced topology from \(G\) under the continuous canonical quotient map \(q\) from \(G\) to \(\tilde{g}\) (In section \(3.3\) we make use of this as part of our discussion about explicitly embedding the \(\tilde{g}\) boundary  into the \(a\) boundary) \cite{munk1}. We emphasize that our construction relies on the existence of open embeddings. This is not an overly restrictive requirement as all theories specified via a metric in some coordinate system (eg. general relativity, Einstein-Gauss-Bonnet, etc.), have an implicit envelopment. (See the discussion in the conclusion of reference \cite{ss1}.) \cite{ger1}, noted that there is no unique way of defining the \(g\) boundary, and that his particular construction was ad hoc. However, under the condition that envelopments of the spacetime manifold exist, the \(a\) boundary provides us with a natural way of defining open sets on \(G\). It also allows for the definition of a desired quotient on \(G\) via definition (\ref{def155}).

We now presents some results on properties of the \(\tilde{g}\) boundary.

\begin{theorem}\label{propp2}
The \(\tilde{g}\) boundary is Hausdorff separable in the coinduced topology from \(G\).
\end{theorem}
\begin{proof}
Let \(\sigma_1\) and \(\sigma_2\) be two points in \(\tilde{g}\), and let 
\begin{eqnarray*}
\bigcup\limits_{i\in I}S\left(O_i\right)\  \text{and}\  \bigcup\limits_{j\in J}S\left(O_j\right) 
\end{eqnarray*}
be open neighborhoods of \(\sigma_1\) and \(\sigma_2\) respectively. Choose open subsets of \(M\) such that \(O_m\cap O_n=\varnothing\), for any pair \(m,n\) such that \(m\in I\) and \(n\in J\). Now fix an envelopment \(\phi:M\longrightarrow \hat{M}\). Curves associated with the points \(\sigma_1\) and \(\sigma_2\) approach different abstract boundary points, and therefore approach different boundary points (or sets) in \(\partial_{\partial}M\), say \(\hat{p}_1\) and \(\hat{p}_2\) attached to \(O_m\) and \(O_n\) respectively (for fixed \(m\) and \(n\)), that are not equivalent under the abstract boundary equivalence. In the topology induced by \(\hat{M}\) on \(\partial_{\phi}M\), we can always choose disjoint open neighborhoods
\(N_{\hat{p}_1}\) and \(N_{\hat{p}_1}\) of \(\hat{p}_1\) and \(\hat{p}_2\) respectively. Since \(O_m\cap O_n=\varnothing\), and since curves in \(O_m\) and \(O_n\) enter and stay in \(O_m\cap N_{\hat{p}_1}\) and \(O_n\cap N_{\hat{p}_2}\) respectively, \(S\left(O_m\right)\cap S\left(O_n\right)=\varnothing\). We therefore have that
\begin{eqnarray*}
\left(\bigcup\limits_{i\in I}S\left(O_i\right)\right)\cap\left(\bigcup\limits_{j\in J}S\left(O_j\right)\right)=\varnothing,
\end{eqnarray*}
and hence concludes the proof.
\end{proof}
We note that \(G\) is not Hausdorff separable in the topology with basis given by the open sets \(S\left(O\right)\). This is because any two points \(\left(p,\xi^{\alpha}\right),\left(p,\xi^{\beta}\right)\in G\) will always lie in the same open set since curves associated to the two points start at the same manifold point and thus remain in the open set of the manifold containing the point. However, we have shown that \(\tilde{g}\) is Hausdorff separable since under the equivalence relation \(E_1\) these points are identified.
\begin{theorem}\label{chee1}
Let \(M^*\) be the manifold formed by the completion of the spacetime manifold \(M\) with \(\tilde{g}\): \(M^*\equiv M\sqcup \tilde{g}\), and let a basis for a topology on \(M^*\) be given by open subsets of the form 
\begin{eqnarray}\label{ffgpapee}
\left(O,\tilde{U}=\bigcup\limits_{i\in I}S\left(O_i\right)\right),
\end{eqnarray}
where \(\tilde{U}\) are the open sets of \(\tilde{g}\), coinduced by the open sets on \(G\), and \(O=O_k\) for some \(k\in I\) is an open set in \(M\). Then \(M^*\) is Hausdorff. 
\end{theorem}

\begin{proof}

That \(M^*\) is Hausdorff follows from the fact that the disjoint union preserves the Hausdorff property: for points \(p\in M\) and \(\sigma\in\tilde{g}\), open neighborhoods \(O_k\ni p\) and  
\begin{eqnarray*}
\tilde{U}=\bigcup\limits_{i\in I}S\left(O_i\right)\ni \sigma
\end{eqnarray*}
are disjoint.
\end{proof}

One may notice that, for our choice of basis for a topology on \(M^*\), \eqref{ffgpapee}, the open set \(O\) of \(M^*\) restricted to \(M\) is simply required to be a choice of one of the \(O_i's\) defining the open set \(\tilde{U}\) of \(M^*\) restricted to \(\tilde{g}\). This may be done because boundary points (sets) attached to all of the \(O_is\) are identified under the abstract boundary equivalence.

We stress again that, in the entirety of this subsection's construction we have considered b.p.p. curves \(\gamma_p\) associated to \(\left(p,\xi^{\alpha}\right)\in G\). In general they need not be geodesics. However, an explicit reduction to a construction more similar to that of Geroch is obtained by insisting that they are geodesics. This restricted case remains a generalization of Geroch's \(g\) boundary in the sense that we have an additional equivalence relation \(E_1\), which is necessary in order to yield \(T_2\) separability.

The key notion here is a rephrasing of the \(a\) boundary using b.p.p.curves explicitly related to the tangent bundle. This is what facilitates the understanding of the relationship between the \(a\) boundary and other boundary constructions, in this case, a construction similar to the \(g\) boundary. This relationship is considered in more detail in the next subsection.

\subsection{Relation to the \(a\) boundary}
In this section we now show how the \(\tilde{g}\) boundary can be embedded into the \(a\)  boundary via explicit mappings. We first define an equivalence relation on \(\phi\left(M\right)\). 

\begin{definition}\label{def06}
Let \(\phi \left(p_1\right),\phi \left(p_2\right)\in \phi\left(M\right)\). The \textbf{equivalence relation} on \(\phi\left(M\right)\), denoted \(E_2\), is defined as follows: \(\phi \left(p_1\right)\sim \phi \left(p_2\right)\) iff 
\begin{itemize}
\item[a.] \(\phi \left(p_1\right)=\phi\circ \pi \left(p_1,\xi^{\alpha}_1\right)\) and \(\phi \left(p_2\right)=\phi\circ \pi \left(p_2,\xi^{\alpha}_2\right)\), where the \(\left(p_i,\xi_i^{\alpha}\right)\) are in open sets of \(G\), and 

\item[b1.] \(\phi \left(p_1\right)\) and \(\phi \left(p_2\right)\) lie on the same curve, or

\item[b2.] \(\phi \left(p_1\right)\) and \(\phi \left(p_2\right)\) lie on curves approaching the same abstract boundary point.
\end{itemize}
\end{definition}
This yields equivalence classes of curvess with limit points being the same abstract boundary point, the collection \(\phi\left(M\right)/_{E_2}\), which we shall denote by \(\phi\left(M\right)_{geo}\). It is not difficult to show that \(E_2\) is indeed an equivalence relation. Reflexivity and symmetry follow immediately from the definition. To show transitivity, let \(\phi\left(p_1\right)\sim \phi\left(p_2\right)\). Then either \(\phi\left(p_1\right)\) and \(\phi\left(p_2\right)\) lie on the same curve \(\gamma\) or \(\phi\left(p_1\right)\) and \(\phi\left(p_2\right)\) lie on curves approaching the same abstract boundary point. Suppose \(\phi\left(p_1\right)\) and \(\phi\left(p_2\right)\) both lie on \(\gamma\). If \(\phi\left(p_2\right)\) and \(\phi\left(p_3\right)\) lie on \(\gamma\), then \(\phi\left(p_1\right)\sim \phi\left(p_3\right)\). Otherwise, \(\phi\left(p_3\right)\) lies on some curve \(\tilde{\gamma}\) such that \(\tilde{\gamma}\) approaches the same abstract boundary point as \(\gamma\). Since \(\phi\left(p_1\right)\) lies on \(\gamma\), this would imply \(\phi\left(p_1\right)\sim \phi\left(p_3\right)\). Now suppose \(\phi\left(p_1\right)\) and \(\phi\left(p_2\right)\) lie on curves \(\gamma\) and \(\tilde{\gamma}\) respectively, both approaching the same abstract boundary point. If \(\phi\left(p_3\right)\) lies on either of \(\gamma\) or \(\tilde{\gamma}\), then \(\phi\left(p_3\right)\sim \phi\left(p_1\right)\) or \(\phi\left(p_3\right)\sim \phi\left(p_2\right)\) which would imply \(\phi\left(p_3\right)\sim \phi\left(p_2\right)\) or \(\phi\left(p_3\right)\sim \phi\left(p_1\right)\) respectively. Otherwise, \(\phi\left(p_3\right)\) lies on some curve \(\hat{\gamma}\) (different from \(\gamma\) and \(\tilde{\gamma}\)) approaching the same abstract boundary point as \(\gamma\) and \(\tilde{\gamma}\), which would imply that \(\phi\left(p_3\right)\sim \phi\left(p_1\right)\), \(\phi\left(p_1\right)\sim \phi\left(p_2\right)\) and \(\phi\left(p_3\right)\sim \phi\left(p_2\right)\).

Let the map
\begin{eqnarray}\label{to2}
q:G\longrightarrow \tilde{g},
\end{eqnarray}
be the canonical quotient map defined by
\begin{eqnarray*}
\left(p,\xi^{\alpha}\right)\mapsto \left[\left(p,\xi^{\alpha}\right)\right]_{E_1},
\end{eqnarray*}
which sends points in \(G\) to its equivalence class under the equivalence relation \(E_1\). Define a map
\begin{eqnarray}\label{to3}
\kappa:G\longrightarrow \phi\left(M\right)_{geo},
\end{eqnarray}
by 
\begin{eqnarray*}
\left(p,\xi^{\alpha}\right)\mapsto \left[\phi\left(p\right)\right]_{E_2},
\end{eqnarray*}
which maps a point \(\left(p,\xi^{\alpha}\right)\) of \(G\) to the equivalence class (under the equivalence relation \(E_2\)) containing the image of the associated curve under the composition \(\phi\circ\pi\). (The set \(\phi\left(M\right)_{geo}\) is just the collection of the equivalence classes \(\left[\phi\left(p\right)\right]_{E_2}\).) The map \(q\) is a quotient map and thus has the natural inverse map, \(q^{-1}\), which sends an equivalence class to the set of its elements. Clearly \(\kappa\) is constant on the set \(q^{-1}\left(\left[\left(p,\xi^{\alpha}\right)\right]_{E_1}\right)\), for \(\left[\left(p,\xi^{\alpha}\right)\right]_{E_1}\in \tilde{g}\), since all elements in \(q^{-1}\left(\left[\left(p,\xi^{\alpha}\right)\right]_{E_1}\right)\) are sent to \(\left[\phi\left(p\right)\right]_{E_2}\). The map \(\kappa\) thus induces a map (see theorem \(22.2\) of reference \cite{munk1})
\begin{eqnarray}\label{to4}
r:\tilde{g}\longrightarrow \phi\left(M\right)_{geo},
\end{eqnarray}
such that \(r\circ q=\kappa\). Hence, the diagram
\begin{eqnarray}\label{to5}
\begin{tikzcd}
G \arrow[rd, "\kappa"] \arrow[r, "q"] & \tilde{g} \arrow[d, "r"] \\
&\phi\left(M\right)_{geo}.
\end{tikzcd}
\end{eqnarray}
commutes. The map \(r\) sends points of \(\tilde{g}\) to the appropriate equivalence class under the equivalence relation \(E_2\).

We now introduce the notion of the limit operator, \cite{lo1,lo2,lo3}, which allows us to attach limit points/endpoints to curves from \(\phi\left(M\right)\). Let \(X\) be any set and let \(S\left(X\right)\) denote the set of sequences in \(X\). Let \(P\left(X\right)\) denote the set of parts of \(X\). One defines the limit operator as a map \(l:S\left(X\right) \longrightarrow P\left(X\right)\) which sends sequences in \(S\left(X\right)\) to their limit points in \(P\left(X\right)\), satisfying the compatibility condition of subsequences: if \(\sigma_1,\sigma_2\in S\left(X\right)\), and \(\sigma_2\) is a subsequence of \(\sigma_1\), then \(l\left(\sigma_1\right)\subset l\left(\sigma_2\right)\).

As has been mentioned, all manifolds considered here are metrizable. Let \(t_i\) be an increasing infinite sequence of real numbers. Then a curve \(\phi\left(\gamma_p\right)\in \phi\left(M\right)\) can be written as the sequence \(\phi\left(\gamma_p\left(t_i\right)\right)\). The limit operator
\begin{eqnarray*}
L:\phi\left(M\right)_{geo} \longrightarrow \phi\left(M\right)^l_{geo},
\end{eqnarray*}
where \(\phi\left(M\right)^l_{geo}\) consists of equivalence classes of limit points of the sequences in \(\phi\left(M\right)\), takes equivalence classes of curves in \(\phi\left(M\right)_{geo}\) to equivalence classes of limit points in \(\partial_{\phi}M\) under \(E_2\). The equivalence classes are elements of the abstract boundary. The composition 

\begin{eqnarray}\label{to500}
L\circ r:\tilde{g} \longrightarrow \phi\left(M\right)^l_{geo}
\end{eqnarray}
given by
\begin{eqnarray}\label{to6}
\left[\left(p,\xi^{\alpha}\right)\right]_{E_1} \mapsto \left[B\right],
\end{eqnarray}
sends the \(\tilde{g}\) boundary point \(\left[\left(p,\xi^{\alpha}\right)\right]_{E_1}\) to the associated equivalence class \(\left[B\right]\), for some boundary set \(B\) in \(\partial_{\phi}M\), of limit points in \(\phi\left(M\right)^l_{geo}\). As \(B\) is a representative set of an abstract boundary point (from the definition of \(\phi\left(M\right)_{geo}\) in (\ref{to3})), we can write the composite \(L\circ r\) as the map

\begin{eqnarray*}
L\circ r:\tilde{g} \longrightarrow \bar{a},
\end{eqnarray*}
defined by
\begin{eqnarray}\label{to7}
\left[\left(p,\xi^{\alpha}\right)\right]_{E_1} \mapsto \left[\hat{p}\right],
\end{eqnarray}
where \(\left[\hat{p}\right]\ni B\) and \(\bar{a}\subseteq a\) is some subset of the abstract boundary \(a\). The map \(L\circ r\) is a bijection, and gives us a desired map from \(\tilde{g}\) to \(a\).  

\begin{theorem}\label{theo8}
The composite map \(L\circ r\) is a homeomorphism from \(\tilde{g}\) onto its image in \(a\).
\end{theorem}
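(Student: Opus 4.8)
The plan is to establish three things in turn: that \(L\circ r\) is a bijection of \(\tilde g\) onto \(\bar a\), that it is continuous, and that its inverse is continuous (equivalently, that \(L\circ r\) is an open map onto \(\bar a\)). Bijectivity is essentially already in hand: \(\bar a\) is defined as the image of \(L\circ r\), so surjectivity is automatic, while injectivity follows from the defining property of \(E_1\) — two elements of \(G\) lie in the same \(E_1\)-class precisely when their geodesics approach the same abstract boundary point — so that distinct points of \(\tilde g\) are sent to distinct abstract boundary points \(\left[\hat p\right]\). The content of the theorem is therefore entirely topological.

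For continuity I would not treat \(r\) and \(L\) separately but reduce to a single map on \(G\). Since \(q:G\longrightarrow\tilde g\) is a quotient map, \(L\circ r\) is continuous if and only if \(L\circ r\circ q=L\circ\kappa:G\longrightarrow\bar a\) is continuous. Now \(\kappa\) factors as the continuous bundle projection \(\pi\), followed by the open embedding \(\phi\), followed by the canonical quotient \(\phi\left(M\right)\longrightarrow\phi\left(M\right)_{geo}\) (which is continuous by definition of the quotient topology), so \(\kappa\), and hence \(r\) by Theorem \ref{gghhp}, is continuous before the limit operator is applied. It then remains to check that the preimage under \(L\circ\kappa\) of a basic open neighborhood of an abstract boundary point \(\left[\hat p\right]\) in \(\bar a\) — which, in the subspace topology inherited from the topology on \(\bar M\), is governed by the open sets \(O\subseteq M\) to which \(\left[\hat p\right]\) is attached — contains the corresponding set \(S\left(O\right)\). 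This is immediate from the construction (\ref{to1}): any \(\left(q,\xi^{\beta}\right)\in S\left(O\right)\) has \(\phi\left(\gamma_q\right)\subseteq\phi\left(O\right)\) with a subcurve lying in \(\phi\left(O\right)\cap N_{\left[\hat q\right]}\), so the abstract boundary point it limits to is attached to \(O\) and therefore lies in the chosen neighborhood.

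For continuity of the inverse I would show directly that \(L\circ r\) carries basic open sets of \(\tilde g\) to open sets of \(\bar a\). A basic open set of \(\tilde g\) is the \(q\)-image of an \(E_1\)-saturated open set of \(G\), i.e. of a union \(\bigcup_{i\in I}S\left(O_i\right)\), and since \(L\circ r\circ q=L\circ\kappa\) commutes with unions it suffices to compute \(L\circ\kappa\bigl(S\left(O\right)\bigr)\). The claim is that this set equals the set \(B_O\) of abstract boundary points attached to \(O\). The inclusion \(\subseteq\) is the observation made above; for \(\supseteq\), any \(\left[\hat p\right]\) attached to \(O\) is the endpoint (in the sense of the limit operator \(L\)) of some b.p.p.\ geodesic that eventually enters \(\phi\left(O\right)\cap N_{\left[\hat p\right]}\) and thereafter remains in \(\phi\left(O\right)\), so a tail of that geodesic determines a point of \(S\left(O\right)\) mapping to \(\left[\hat p\right]\). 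Since \(B_O\) is exactly the abstract-boundary part of the basic open set of \(\bar M\) attached to \(O\), it is open in the subspace \(a\); taking unions over \(i\in I\) then shows the image of every basic open set of \(\tilde g\) is open in \(\bar a\), completing the argument.

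The step I expect to be the genuine obstacle is this last one — proving \(L\circ r\) open — and within it the precise identification \(L\circ\kappa\bigl(S\left(O\right)\bigr)=B_O\) together with the verification that \(B_O\) is relatively open in \(a\). Two points need care: first, that the subcurve condition in (\ref{to1}) is strong enough to force convergence to a \emph{single} abstract boundary point, so that \(L\) is genuinely single-valued on the classes in play (its subsequence-compatibility then pins the value down); and second, that the identification of \(\phi\left(O\right)\cap N_1\) with \(\phi'\left(O\right)\cap N_2\) across equivalent boundary sets — which is what licenses the notation \(N_{\left[\hat p\right]}\) — is compatible with the covering relation finely enough that \(B_O\) is independent of the envelopment used. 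Both follow from Definitions \ref{def2}--\ref{def4} and the attachment definition, but they are where the proof must be carried out by hand rather than by invoking a cited result.
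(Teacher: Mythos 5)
Your overall architecture (bijection, continuity via the universal property of quotient maps, then inverse-continuity) matches the paper through the continuity step, but you diverge sharply at the last step: the paper does \emph{not} prove openness directly. It instead asserts that \(\tilde{g}\) is compact (``the boundary of a bounded set is compact''), notes that \(a\) is Hausdorff, and invokes the standard fact that a continuous bijection from a compact space to a Hausdorff space is a homeomorphism. Your decision to prove openness by hand is arguably the more honest route, since the paper's compactness claim is itself unargued (neither the boundedness of the relevant set nor the inference from boundedness to compactness of the boundary is justified in a general spacetime), and compactness of \(\tilde{g}\) is far from obvious.

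However, your openness argument has a genuine gap at the identity \(L\circ\kappa\left(S\left(O\right)\right)=B_O\). The inclusion \(\subseteq\) is fine, but \(\supseteq\) does not follow from the definitions: attachment of \(\left[\hat{p}\right]\) to \(O\) only says that every neighborhood of \(\hat{p}\) meets \(\phi\left(O\right)\); it does not produce a b.p.p.\ geodesic that limits to \(\hat{p}\) \emph{and} satisfies the membership condition of (\ref{to1}), which requires \(\phi\left(\gamma_q\left(\lambda\right)\right)\in\phi\left(O\right)\) for \emph{all} \(\lambda\), not merely for a tail. Unapproachable points are the extreme failure (no curve limits to them at all), and even restricting to the approachable part \(\bar{a}\), a point \(\left[\hat{p}\right]\) attached to \(O\) may only be reached by geodesics that approach it from outside \(\phi\left(O\right)\), or that enter and leave \(\phi\left(O\right)\) without any subcurve of the required kind surviving the ``entire curve in \(\phi\left(O\right)\)'' clause. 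In that case \(L\circ\kappa\left(S\left(O\right)\right)\) is a proper subset of \(B_O\cap\bar{a}\), and you have not shown that this proper subset is relatively open. To close the gap you would need either an additional hypothesis (e.g.\ that every attached, approachable abstract boundary point admits an approaching geodesic eventually confined to any open set it is attached to) or a different description of the image of \(S\left(O\right)\); as written, ``both follow from Definitions \ref{def2}--\ref{def4}'' overstates what the definitions deliver.
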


\begin{proof}
Clearly, \(L\circ r\) is a bijection onto its image \(L\circ r\left(\tilde{g}\right)\): any two equivalent points in \(\left[\hat{p}\right]\) are endpoints/limit points of curves associated with equivalent points in \(G\) (under \(E_1\)). Since \(\kappa\) is a quotient map, \(\kappa\) is continuous. By Theorem \(22.2\) of reference \cite{munk1}, \(r\) is also continuous. The limit operator \(L\) can be considered as a map sending collection of points in \(\phi\left(M\right)\) (in this case those lying on a curve) to a point in the boundary of \(\phi\left(M\right)\) in \(\hat{M}\) (in this case the limit point of the curve on which those points lie). Then \(L\) is a quotient map and therefore continuous. Since the composition of continuous maps is continuous, \(L\circ r\) is therefore continuous.

Since the boundary of a bounded set is compact, and the spacetime completion can be written as the union of \(M\) and \(\tilde{g}\), we know that \(\tilde{g}\) is compact. We also know that the abstract boundary \(a\) (and every subset of \(a\)) is Hausdorff. By Theorem (26.6) in reference \cite{munk1}, a continuous bijection from a compact space to a Hausdorff space is a homeomorphism. We therefore have that \(L\circ r\) is a homeomorphism, i.e. \(L\circ r\) embeds \(\tilde{g}\) into \(a\).
\end{proof}

That this embedding might be a proper embedding is an important point to note since there might also be \textit{unapproachable} boundary points (see reference \cite{ss1}) that are not approached by any curve. There might also be cases where the image \(L\circ r\left(\tilde{g}\right)=\bar{a}\) coincides with the \(a\) boundary, and future work could consider under what conditions this happens. We stress that this construction holds for any family of b.p.p. curves (not just geodesics) yielding a generalization of Geroch's \(g\) boundary. 

Via its deep relationship to the \(a\) boundary and the manner in which that boundary construction may be used (given some metric) to classify boundary points as \textit{regular/singular/point at infinity/unapproachable}, our construction similarly allows for the classification of all approachable boundary points and not just singular ones, which could prove useful for future work where we might want to look at other boundary constructions built from the tangent bundle structure. We emphasize that, for the purposes of this paper, the focus is on maps of the form (\ref{to7}) since  we are primarily interested in curves as they approach their associated limit points.

\subsection{Simple illustrative examples}\label{rety}

Simple examples that illustrate the subtle differences between our construction and Geroch's original construction include the Schwarzschild spacetime and Misner's simplified version of the Taub-NUT spacetime. For the Schwarzschild spacetime, the \(\tilde{g}\) boundary is a single point, \(\alpha\), consisting of points in \(G\) whose associated geodesics all approach the surface \(r=0\), which is a point in the \(a\) boundary. By contrast, the \(g\) boundary is made up of two parts, each topologically \(\mathbb{S}^2\times \mathbb{R}\), for different approaches of the geodesics to \(r=0\), \cite{ger1}. 

For Misner's simplified version of the Taub-NUT spacetime, \(\tilde{g}\) consists of just one point, \(\beta\), which is the equivalence class consisting of those points in \(G\) associated with all geodesics that approach \(t=0\). In contrast, the \(g\) boundary consists of three points, \(\tilde{g}\), and two circles, \(C\) and \(C'\), \cite{ger1}, with a point on the circles representing those geodesics from both halves of the cylinder striking the circles at exactly one point.

Consider another simple example. Let \(\phi\) embed the unit interval \(\left(0,1\right)\) into \(\mathbb{R}\) via the inclusion map. The boundary set \(B=\lbrace{0,1\rbrace}\) is the boundary set of this envelopment. Now let a second envelopment, \(\phi'\), embed \(\left(0,1\right)\) in the unit circle via the map \(\theta=2\pi t\). The boundary points of the first envelopment are both identified with the boundary point \(0\) of the second envelopment and so \(0\) and \(1\) of the first envelopment are equivalent (under the equivalence relation defining the abstract boundary) to \(0\) of the second envelopment. The boundary set \(B\) is disconnected and so takes the discrete topology. Therefore, the singletons \(\lbrace{0\rbrace}\) and \(\lbrace{1\rbrace}\) are open sets containing each of the boundary points.

Suppose we are presented with just the envelopment \(\phi\). The boundary points \(0\) and \(1\) may wrongly be associated with different equivalence classes in the \(g\) boundary. With the knowledge of the second envelopment,  it becomes clear that they are equivalent (since they are both equivalent to \(0\) in the second envelopment). We thus have \(0\) and \(1\) as elements of an equivalence class in the \(\tilde{g}\) boundary.

Let us state in general terms a subtle point for considering not just a single envelopment. Suppose we have a manifold \(M\), and let \(\phi,\phi'\) be two envelopments of \(M\). Let \(\partial_{\phi}M=B\) be a regular boundary set (comprise of regular boundary points), and suppose \(\hat{p}\in \partial_{\phi'}M\) is an apparently essential singular boundary point which is equivalent to \(B\) under the abstract boundary equivalence. Then the point \(\left[\hat{p}\right]\in\tilde{g}\) of \(M\) contains \(\hat{p}\) and \(B\).

Now, suppose one is \textit{only} presented with the second envelopment \(\phi'\). One might wrongly conclude that the manifold \(M\) is singular. However, since \(B\) is regular and \(B\sim \hat{p}\), by the classification of the different types of boundary points, the singularity of \(\hat{p}\) is an artefact of the choice of envelopment; \(\left[\hat{p}\right]\) is not a singular point.

\section{Limitations}\label{gtr}

Relating the \(g\) and \(a\) boundaries as have been done in this paper comes with the inherited problem from the attached point topology of the \(a\) boundary: there is a detachment of manifold points from singular points in the topology on the completion of the spacetime manifold with \(\tilde{g}\). This was (in the \(a\) boundary context), in fact, the motivation of the construction of the \textit{strongly attached point topology} introduced by \cite{ss10}. We do not solve this problem in this work. It is outside the scope of the paper as was made clear in the objectives of this work. This, however, does not mean that there is not a potential solution using our approach.

As was noted earlier, the attachment relation depends on the intersection of all open neighborhoods of a boundary point (set) with the image of the open set of \(M\) in \(\hat{M}\) to which the boundary point (set) is attached, i.e. \(N\cap \phi(O)\neq \varnothing\), for all open neighborhoods \(N\subset \hat{M}\) of the boundary point (set). A boundary point (set) being strongly attached to an open set in \(M\) is a much stricter condition \cite{ss10}. In particular, it requires the existence of an open neighborhood \(N\subset \hat{M}\) of \(\hat{p}\) (\(B\)) such that \(N\cap \phi(M)\subset \phi(O)\). (Notice that strongly attached implies attached (this was proved in \cite{ss10}), but the converse does not hold in general.)

An important property of the strong attachment relation not possessed by the attachment relation is that \textit{if a boundary point \(\hat{p}\) is strongly attached to open sets \(O_1\) and \(O_2\) of \(M\), then \(O_1\cap O_2\neq \varnothing\), and \(\hat{p}\) is also strongly attached to the intersection \(O_1\cap O_2\)}.

Interpreted in terms of our approach, it is obvious that one would require a modification of the \(S\left(O\right)\) sets. Indeed, we see that we can have curves approaching boundary points, completely lying in the open set \(\phi\left(O\right)\) of the manifold, but outside the intersection \(N\cap\phi\left(M\right)\). Hence, the subsequent equivalence relations will have to be modified, which would lead to an embedding in a different way.

\section{Conclusion}\label{05}

The aim of this paper was to construct a Geroch-like boundary, \(\tilde{g}\), and establish a relationship between this \(\tilde{g}\) boundary and the \(a\) boundary. Our construction, in essence, has rephrased the \(a\) boundary in terms of the reduced tangent bundle. Moreover, we potentially have a means of generalizing other constructions built from the tangent bundle structure. Specializing to geodesics we obtain a narrower generalization of the Geroch boundary. In the process of obtaining the definition of the equivalence relation for our construction, all other identifications we attempted were non-Hausdorff. 

It appears that \(\tilde{g}\) is the only pertinent subset of the \(a\) boundary (of course this has to be rigorously established) with respect to the following two conditions of \cite{glw1}:

\begin{itemize}
\item[a.] For every incomplete geodesic \(\gamma_p\) associated with a point \(\left(p,\xi^{\alpha}\right)\in G\), there exists a point \(\hat{p}\in \partial_{\phi}M_I\) such that \(\phi\left(\gamma_p\right)\) limits to \(\hat{p}\),

\item[b.] the extension \(\overline{exp}:\bar{U}\longrightarrow \overline{\phi\left(M\right)}\) of the exponential map \(exp:U\longrightarrow \phi\left(M\right)\), which take points of \(G\) associated with incomplete geodesics of affine length exactly one to their endpoints/limit points in \(\overline{\phi\left(M\right)}\), is continuous, 
\end{itemize}
where the notation \(\partial_{\phi}M_I\) denotes the boundary set consisting of the collection of endpoints of incomplete geodesics in the image of \(M\) under \(\phi\), and the over bar denotes the topological closure. If that is the case, this might provide a route to establishing a relationship between the \(b\) and \(a\) boundaries, for example, given the relationship between Geroch's \(g\) boundary and the \(b\) boundary.

Now that we have a relationship between the \(\tilde{g}\) and the \(a\) boundary, we have identified the following as additional possible future work, in addition to the problems identified in Section \ref{gtr}:
\begin{itemize}
\item[1.] Can the causal, metric and differentiable structures defined on \(g\), \cite{ger1}, be extended to \(\tilde{g}\)?

\item[2.] If the answer to \(1.\) is yes, can our construction then provide a way to construct causal, metrical and differentiable structures on the \(a\) boundary?

\item[3.] Again, if the answer to \(2.\) is in the affirmative, what new information, if any, can we obtain from these structures on the \(a\) boundary?
\end{itemize}

\section*{Acknowledgements}

The authors would like to thank Professor Rituparno Goswami, Professor Dharmanand Baboolal and Professor Sunil D. Maharaj of the Department of Mathematics, UKZN, for useful suggestions and advice. We would also like to thank Professor Zurab Janelidze of the Department of Mathematical Sciences, Stellenbosch University, for useful discussions. AS acknowledges that this research was supported by the National Research Foundation (NRF), South Africa and the First Rand Bank.

\end{document}